
\documentclass[letterpaper, 10 pt, conference]{ieeeconf}  

\IEEEoverridecommandlockouts                              

\overrideIEEEmargins                                      



\UseRawInputEncoding

\usepackage{graphics} 
\usepackage{epsfig} 
\usepackage{mathptmx} 
\usepackage{times} 
\usepackage{amsmath} 
\usepackage{amssymb}  
\usepackage{amsmath,nicefrac}
\usepackage{float}
\usepackage{multicol}
\usepackage{epstopdf}
\usepackage{graphicx}
\usepackage{dblfloatfix}
\usepackage{subcaption}
\usepackage{array}
\usepackage{cite}
\newcolumntype{P}[1]{>{\centering\arraybackslash}p{#1}}
\usepackage{balance}
\usepackage{xcolor}
\usepackage{comment}

\usepackage{rotating}
\usepackage{wrapfig}
\usepackage[utf8]{inputenc}
\usepackage[english]{babel}
\usepackage{amsthm}
\newtheorem{theorem}{Theorem}

\newtheorem{lemma}[theorem]{Lemma}
\theoremstyle{definition}
\newtheorem{definition}{Definition}[]

\newtheorem{remark}{Remark}

\usepackage{cite}
\usepackage{amsmath,amssymb,amsfonts}
\usepackage{algorithmic}
\usepackage{graphicx}

\usepackage{psfrag}
\usepackage{cancel}

\makeatletter
\let\NAT@parse\undefined
\makeatother
\usepackage{hyperref}

\makeatletter
\newcommand{\linebreakand}{%
  \end{@IEEEauthorhalign}
  \hfill\mbox{}\par
  \mbox{}\hfill\begin{@IEEEauthorhalign}
}
\makeatother

\def\mc{\mathcal}

\title{\LARGE \bf
A negative imaginary approach to hybrid integrator-gain system control
}

\author{{Kanghong Shi$^{\dagger}$}\\
\textit{kanghong.shi@anu.edu.au}
\and
{Nastaran Nikooienejad$^{\dagger}$}\\
\textit{nastaran.nikooienejad@utdallas.edu}\vspace{0.5cm}
\linebreakand
{Ian R. Petersen, \IEEEmembership{Fellow, IEEE}}\\
\textit{ian.petersen@anu.edu.au}
\and
{S. O. Reza Moheimani, \IEEEmembership{Fellow, IEEE}}\\
\textit{reza.moheimani@utdallas.edu}
\thanks{$^{\dagger}$ Equal contribution.}
\thanks{This work was supported by the Australian Research Council under grant DP190102158, and partially by the UTD Center for Atomically Precise Fabrication of Solid-state Quantum Devices.}
\thanks{Kanghong Shi and Ian R. Petersen are with the School of Engineering, College of Engineering and Computer Science, Australian National University, Canberra, Acton, ACT 2601, Australia. Nastaran Nikooienejad and S. O. Reza Moheimani  are with the Erik Jonsson School of Engineering and Computer Science, The University of Texas at Dallas, Richardson, TX 75080 USA. Corresponding author: S. O. Reza Moheimani.}%
}

\begin{document}

\maketitle
\thispagestyle{empty}
\pagestyle{empty}

	\begin{abstract}
		In this paper, we show that a hybrid integrator-gain system (HIGS) is a nonlinear negative imaginary (NNI) system. We prove that the positive feedback interconnection of a linear negative imaginary (NI) system and a HIGS is asymptotically stable. We apply the HIGS to a MEMS nanopositioner, as an example of a linear NI system, in a single-input single-output framework. We analyze the stability and the performance of the closed-loop interconnection in both time and frequency domains through simulations and demonstrate the applicability of HIGS as an NNI controller to a linear NI system.
	\end{abstract}
	
	\begin{keywords}
	hybrid integrator-gain systems, (nonlinear) negative imaginary systems, robust control.
	\end{keywords}

	\section{INTRODUCTION}
	The hybrid integrator-gain system known as HIGS was introduced in~\cite{Deenen_HIGS_motion_control_2017} as a nonlinear integrator system to overcome the limitations of Bode's gain-phase relationship~\cite{Heertjes_HIGS_Vibration_2019}. The HIGS inherits the potential benefits of typical reset control systems such as the Clegg Integrator (CI)~\cite{Clegg_1958} and the First Order Reset Element (FORE) system~\cite{Zaccarian_FORE_2005} enhancing the phase lag by about $52$ degrees compared to a standard integrator, without the common drawbacks of reset systems~\cite{Deenen_HIGS_motion_control_2017}. This hybrid system alternates between integrator and gain modes instead of resetting the state to zero, which yields a non-smooth but continuous control signal and prevents excitation of higher harmonics induced by reset. The control signal, governed by switching logic, also satisfies a sector constraint that restricts the input-output behavior of the HIGS element and enforces the HIGS input and output to be of equal sign~\cite{Heertjes_HIGS_Vibration_2019}.
	
	Control design based on HIGS has found applications in high-precision mechatronic systems for motion tracking~\cite{Deenen_HIGS_motion_control_2017, Eijnden_HIGS_motion_control_2018, Gruntjens_HIGS_Lens_motion_2019} and vibration isolation and damping~\cite{Heertjes_HIGS_Vibration_2019,Achten_HIGS_Skyhook_thesis_2020,Baaij_HIGS_positive_real_systems}. The HIGS element is exploited in various control configurations with linear controllers. For instance, a HIGS-based $\text{PI}^2\text{D}$ controller is designed and employed in~\cite{Deenen_HIGS_motion_control_2017} to control a wafer stage system of an industrial wafer scanner.  With improved phase behavior compared to typical low-pass filters, a HIGS-based second-order low-pass filter is proposed in~\cite{Eijnden_HIGS_motion_control_2018} to achieve a substantial low-frequency disturbance rejection in an industrial wafer scanner. A HIGS element is cascaded with another HIGS element in~\cite{Heertjes_HIGS_Vibration_2019} to construct a HIGS-based band-pass filter for vibration isolation. As discussed in~\cite{Eijnden_HIGS_Overshoot_limitation_2020}, the hybrid integrator-gain system can deal with inherent design limitations in systems that contain an unstable pole by replacing a standard integrator with the HIGS element.
	
	Stability and performance analysis of the closed-loop system with the HIGS is challenging due to the nonlinear nature of the HIGS element. For stability analysis, the closed-loop system is rearranged in Lur'e form where the nonlinearity is isolated from the linear part of the system~\cite{Deenen_HIGS_motion_control_2017}. Accordingly, the input-to-state stability (ISS) of the closed-loop system is guaranteed based on the detectability of the HIGS element and the  circle criterion whereby the sector-boundedness of the nonlinear element along with a loop transformation allows for the application of the passivity theorem~\cite{Vanloon_stability_reset_system_2017}. Accordingly, it is assumed that the underlying linear system satisfies the circle-criterion condition which is less stringent than the strictly positive real criterion~\cite{Loon_stability_reset_2017}.  In~\cite{Deenen_HIGS_motion_control_2017}, stability analysis of the closed-loop system is transformed to frequency-domain conditions that are graphically verifiable using the frequency response function (FRF) of the linear part of the system in the Lur'e form. The stability conditions in~\cite{Deenen_HIGS_motion_control_2017} imply that the underlying linear system must be Hurwitz which is not the case with some types of systems. Therefore, a novel frequency-domain stability analysis is proposed in~\cite{Eijnden_frequency_stability_HIGS_2021} which provides a less conservative stability criterion and incorporates the dynamic nature of the HIGS system. Moreover, less conservative stability conditions are presented in~\cite{Deenen_projection_based_2022} based on a piecewise quadratic Lyapunov function.
	
	Application of the HIGS element to flexible structures with collocated force actuators and position sensors would also be interesting where the underlying system is negative imaginary (NI).
    NI systems theory was introduced in~\cite{Lanzon_stability_2008,Petersen_feedback_2010} to address challenges confronting vibration control of flexible structures \cite{preumont2018vibration,halim2001spatial,pota2002resonant}. Such systems often have highly resonant dynamics, which makes negative velocity feedback control unreliable. NI systems theory provides an alternative control approach, which uses positive position feedback control. From this point of view, NI systems theory can be regarded as a complement to the positive real (PR) systems theory. While a PR system can only have relative degree zero or one, an NI system can have relative degree zero, one, and two \cite{brogliato2007dissipative,shi2021necessary}. This provides a significant advantage in dealing with systems that have an output entry of relative degree two. In the last decade, NI systems theory has attracted the attention of control theory researchers \cite{Xiong_NI_2010,song2012negative,mabrok2014generalizing,bhikkaji2011negative,bhowmick2017lti}. It has been applied in many fields including nano-positioning control \cite{mabrok2013spectral,das2014mimo,das2014resonant,das2015multivariable} and the control of lightly damped structures \cite{rahman2015design,bhikkaji2011negative}.
		
	Roughly speaking, a square real-rational proper transfer matrix $G(s)$ is said to be NI if it is stable and $j(G(j\omega)-G^*(j\omega))\geq 0$ for all $\omega \geq 0$ \cite{Lanzon_stability_2008,Petersen_feedback_2010,Xiong_NI_2010}. A single-input single-output (SISO) linear NI system has a phase between $0$ to $-180$ degrees for all frequencies $\omega>0$. In other words, a SISO linear NI system has its Nyquist plot below the real axis for all positive frequencies. An NI system can be regarded as a dissipative system, for which the supply rate is the inner product of the system's input and the derivative of the system's output \cite{Xiong_NI_2010,song2012negative}. Under mild assumptions, the positive feedback interconnection of an NI system $G(s)$ and a strictly negative imaginary (SNI) system $G_s(s)$ is asymptotically stable if and only if the DC loop gain of the interconnection is strictly less than unity; i.e., $\lambda_{max}(G(0)G_s(0))<1$ (see \cite{Lanzon_stability_2008}).
	
	NI systems theory was extended to nonlinear systems in \cite{Ghallab_Nonlinear_NI_2018} using the dissipativity property. A system is said to be nonlinear negative imaginary (NNI) if there exists a positive definite storage function $V(x)$ such that $\dot V(x)\leq u^T\dot y$, where $x$, $u$ and $y$ are the state, input and output of the system, respectively (see \cite{shi_robust_identical,ghallab2022negative}). Similar to the linear NI systems theory, asymptotic stability can also be achieved for positive feedback interconnected NNI systems. It is shown in \cite{shi_robust_identical,shi_output_free} that under reasonable assumptions, the interconnection of an NNI system and a nonlinear output strictly negative imaginary (OSNI) system is asymptotic stable. Also,  \cite{ghallab2022negative} shows that the interconnection of an NNI system and a nonlinear weakly strictly negative imaginary (WSNI) system is asymptotically stable, under a different set of assumptions.
	
	Since the phase of a HIGS is in the range of the phase of a typical SISO NI system, it is natural to ask if a HIGS is an NI system. However, given that the dynamics of a HIGS switch between two regions and the state equation in the gain mode is not a differential equation, a more suitable question  is if a HIGS is an NNI system. In this paper, we show that a HIGS is indeed an NNI system, with a common Lyapunov storage function for both the integrator and the gain modes. Motivated by the stability results of positive feedback interconnected NI systems in \cite{Lanzon_stability_2008,Petersen_feedback_2010,shi_robust_identical,shi_output_free,ghallab2022negative}, we investigate the control problem of a SISO linear NI system using a HIGS controller. We show, in this paper, that for any linear time-invariant (LTI) NI system with a minimal realization $(A,B,C)$, there always exists a HIGS such that their positive feedback interconnection is asymptotically stable.
	
	The structure of this paper is as follows: Section \ref{sec:pre} provides essential backgrounds on HIGS and NI systems theory. Section \ref{sec:main} provides the main results of the paper, where we show the NNI property of HIGS and present the stability result of the interconnection of a linear NI system and a HIGS. The stability results are illustrated in Section \ref{sec:example}, where the control problem of a MEMS nanopositioner is considered. Section \ref{sec:conclusion} concludes the paper.
	
	{\bf Notation}: The notation in this paper is standard. $\mathbb R$ denotes the field of real numbers. $\mathbb R^{m\times n}$ denotes the space of real matrices of dimension $m\times n$. $A^T$ and $A^*$ denote the transpose and complex conjugate transpose of a matrix $A$, respectively. $\lambda_{max}(A)$ denotes the largest eigenvalue of a matrix $A$ with real spectrum. $\Re [\cdot]$ is the real part of a complex number. For a symmetric or Hermitian matrix $P$, $P>0\ (P\geq 0)$ denotes the property that the matrix $P$ is positive definite (positive semidefinite) and $P<0\ (P\leq 0)$ denotes the property that the matrix $P$ is negative definite (negative semidefinite).

	\section{PRELIMINARIES}\label{sec:pre}
	In this section, we briefly describe the HIGS and review the main definitions and results in the theory of linear and nonlinear NI systems.
	
	\subsection{Hybrid Integrator-Gain System}
	A SISO hybrid integrator-gain system, $\mathcal{H}$ (HIGS) is represented by the following differential algebraic equations (DAEs)~\cite{Deenen_HIGS_motion_control_2017}:
	\begin{equation}\label{eq:HIGS_DAE}
		\mathcal{H}:
		\begin{cases}
			\dot{x}_h(t) = \omega_h e(t), & \text{if}\, (e(t),u(t),\dot{e}(t)) \in \mathcal{F}_1\\
			x_h(t) = k_he(t), & \text{if}\, (e(t),u(t),\dot{e}(t)) \in \mathcal{F}_2\\
			u(t) = x_h(t)
		\end{cases}
	\end{equation}
where $x_h(t),e(t),u(t) \in \mathbb{R}$ denote the HIGS state, input, and output signals, respectively. For convenience, the variables $x_h(t),e(t),u(t)$ will be denoted as $x_h,e,u$ in what follows. Here, $\dot{e}$ is the time derivative of the input $e$ which is assumed to be continuous and piecewise differentiable. Also, $\omega_h \in [0,\infty)$ and $k_h \in (0, \infty)$ represents the integrator frequency and gain value, respectively. These tunable parameters allow for desired control performance. The sets $\mathcal{F}_1$ and $\mathcal{F}_2 \in \mathbb{R}^3$ determine the HIGS modes of operation; i.e. the integrator and gain modes, respectively. By construction, $\mathcal{F} = \mathcal{F}_1 \cup \mathcal{F}_2$ represents the sector $[0, k_h]$ as~\cite{Deenen_HIGS_motion_control_2017,Achten_HIGS_Skyhook_thesis_2020}
\begin{equation}\label{eq:subspace_F}
	\mathcal{F} = \{ (e,u,\dot{e}) \in \mathbb{R}^3 |\, eu \geq \frac{u^2}{k_h}\},
\end{equation}
and $\mathcal{F}_1$ and $\mathcal{F}_2$ are defined as
	\begin{align}
	\mathcal{F}_1& := \mathcal{F} \setminus \mathcal{F}_2\label{eq:subspaces_F1};\\
	\mathcal{F}_2& := \{(e,u,\dot{e}) \in \mathbb{R}^3 | u = k_he\quad \text{and}\quad  \omega_he^2 > k_he\dot{e}\}\label{eq:subspaces_F2}.
\end{align}

The HIGS is designed to primarily operate in the integrator mode with zero initial condition and output signal; i.e., $x_h(0) = 0$ and $u(0) =0$. A switch to the gain mode is enforced if the corresponding integrator dynamics violate the sector constraint $(e,u,\dot e)\in \mathcal F$. On the boundary of the sector $\mathcal{F}$ where the gain mode is active,  $u(t)$ follows the input behavior and $e(t) = 0$ implies $u(t) = 0$. At switching instances, from the gain mode to the integrator mode, the initial condition is equal to the value in gain mode to ensure a continuous control signal when switching back to the integrator mode.

\subsection{Frequency Analysis of HIGS}
Since the HIGS element contains time-invariant dynamic nonlinearities, it is not possible to obtain the frequency response function of the HIGS using regular frequency-domain Fourier analysis. However, a describing function analysis can be performed  to compute the dominant harmonic of the steady-state response of the system to a single-sinusoid using a first-order Fourier series. Thus, a quasi-linear transfer function mapping from the sinusoidal input $\sin(\omega t)$ to HIGS output $u(t)$ can be determined from~\cite{Heertjes_HIGS_Vibration_2019}
\begin{align}\label{eq:HIGS_describing_function}
	D_h (j\omega) =&\frac{\omega_h}{j\omega}\bigg(\frac{\gamma}{\pi} + j\frac{e^{-j2\gamma} - 1}{2\pi} -4j\frac{e^{-j\gamma}-1}{2\pi}\bigg)\notag \\
	&+k_h\bigg(\frac{\pi - \gamma}{\pi} + j\frac{e^{-j2\gamma}-1}{2\pi}\bigg),
\end{align}
where $\gamma = 2\arctan(\frac{k_h\omega}{\omega_h})$ denotes switching instances. The magnitude and phase of the describing function in (\ref{eq:HIGS_describing_function}) is an approximation of the HIGS filter frequency response~\cite{Achten_HIGS_Skyhook_thesis_2020}. According to (\ref{eq:HIGS_describing_function}), the HIGS acts as a static gain ($k_h$) at low frequencies since $\gamma = 0$. It approaches $\frac{\omega_h}{j\omega}\big(1 + j\frac{4}{\pi}\big)$ at higher frequencies with the following magnitude and phase~\cite{Achten_HIGS_Skyhook_thesis_2020}
\begin{align}
	\lim_{\omega \to \infty} |D(j\omega)| &\approx 1.62 \frac{\omega_h}{\omega};\notag\\
	\lim_{\omega \to \infty} \angle D(j\omega) &\approx -38.1^\circ.\notag
\end{align}
This reveals a phase enhancement of about $52^\circ$ compared to the linear counterpart. Fig.~\ref{fig:HIGS_DF} shows the Bode plot of the HIGS describing function, $D(j\omega)$, where $\omega_c = \omega_h |1 + j4/\pi|$ denotes the cut-off frequency.
\begin{figure}[h!]
	\centering
	\psfrag{Frequency(Hz)}{\hspace{-0.3cm}Frequency(Hz)}
	\psfrag{Phase (deg)}{\small Phase (deg)}
	\psfrag{Magnitude (dB)}{\hspace{-0.1cm}\small Magnitude (dB)}
	\includegraphics[width=\columnwidth]{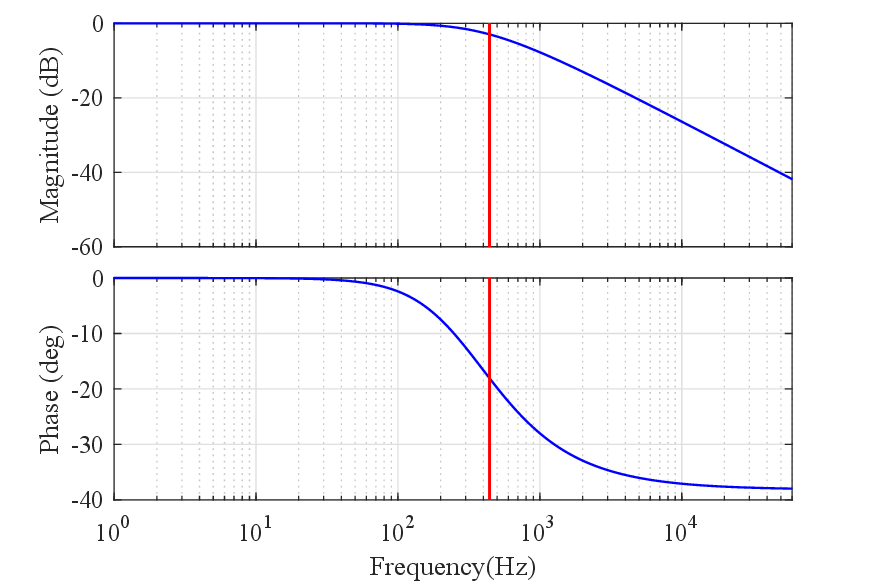}
	\caption{Bode plot of the describing function of a HIGS of the form (\ref{eq:HIGS_DAE}) with $k_h = 1$ and $\omega_h = 600\pi \ \text{rad/s}$.}\label{fig:HIGS_DF}
\end{figure}
\subsection{Negative Imaginary Systems}

\begin{definition}{(Negative Imaginary Systems)}\label{def:NI}\cite{Lanzon_stability_2008,Petersen_feedback_2010,Xiong_NI_2010}
A square real-rational proper transfer function matrix $G(s)$ is said to be NI if:

1. $G(s)$ has no poles at the origin and in $\Re [s]>0$;

2. $j[G(j\omega)-G^*(j\omega)]\geq 0$ for all $\omega \in (0,\infty)$ except for values of $\omega$ where $j\omega$ is a pole of $G(s)$;

3. if $j\omega_0$ with $\omega_0\in (0,\infty)$ is a pole of $G(s)$, then it is a simple pole and the residue matrix $K_0=\lim_{s\to j\omega_0}(s-j\omega_0)jG(s)$ is Hermitian and positive semidefinite.
\end{definition}

\begin{lemma}{(NI Lemma)}\cite{Xiong_NI_2010}\label{lem:NI_lemma} Let $(A,B,C,D)$ be a minimal realization of an LTI the transfer function matrix $G(s)$ where $A\in \mathbb{R}^{n\times n}$, $B \in \mathbb{R}^{n\times p}$, $C\in \mathbb{R}^{p\times n}$, and $D\in \mathbb{R}^{p \times p}$. Then, $G(s)$ is NI if and only if:
\begin{enumerate}
    \item $\text{det}(A) \neq 0$, $D = D^T$;
    \item There exists a matrix $Y = Y^T > 0$, $Y \in \mathbb{R}^{n\times n}$, such that
    \begin{equation*}
        AY + YA^T \leq 0,\quad \text{and}\quad B + AYC^T = 0.
    \end{equation*}
\end{enumerate}
\end{lemma}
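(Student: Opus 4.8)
The plan is to prove the two implications separately, after first recording the two pointwise consequences of Definition \ref{def:NI}. Since $(A,B,C,D)$ is minimal, the poles of $G$ are exactly the eigenvalues of $A$; condition~1 of Definition \ref{def:NI} (no pole at the origin) therefore forces $\det(A)\neq 0$. The symmetry $D=D^T$ follows by letting $\omega\to\infty$ in condition~2: since $G(j\omega)\to D$ and $G^*(j\omega)\to D^T$, the limiting Hermitian inequality together with the reality of $G$ pins down $D=D^T$. These give item~1 of the lemma and are assumed throughout.

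For sufficiency I would argue directly in the frequency domain. Substituting $B=-AYC^T$ into $G(s)=D+C(sI-A)^{-1}B$ and using the resolvent identity $(sI-A)^{-1}A=-I+s(sI-A)^{-1}$ gives $G(s)-D = CYC^T - sC(sI-A)^{-1}YC^T$. Writing $R=(j\omega I-A)^{-1}$ and using $D=D^T$, $Y=Y^T$, a short computation yields
\begin{equation*}
 j\big(G(j\omega)-G^*(j\omega)\big) = \omega\, C\big(RY+YR^*\big)C^T .
\end{equation*}
The key algebraic step is the identity $R(AY+YA^T)R^* = -(RY+YR^*)$, which follows from $RA=-I+j\omega R$ and $A^TR^*=-I-j\omega R^*$. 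Setting $W:=-(AY+YA^T)\geq 0$, this rewrites the right-hand side as $\omega\,(CR)\,W\,(CR)^*\geq 0$ for every $\omega>0$ that is not a pole, which is exactly condition~2 of Definition \ref{def:NI}. Finally, $AY+YA^T\le 0$ with $Y>0$ is a Lyapunov inequality, so $A$ has no eigenvalue in $\Re[s]>0$; together with $\det(A)\neq0$ this gives condition~1, and the residue condition~3 at any imaginary-axis pole can be read off from the same factorization using $Y>0$.

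For necessity I would pass to a positive real problem. Because $G$ has no pole at the origin, the auxiliary transfer function $F(s):=s\big(G(s)-G(\infty)\big)$ is proper with $F(0)=0$, and a direct computation gives $F(j\omega)+F^*(j\omega) = \omega\, j\big(G(j\omega)-G^*(j\omega)\big)\ge 0$; hence $F$ is positive real, with minimal realization $(A,B,CA,CB)$. I would then invoke the positive real (KYP) lemma to obtain $P=P^T>0$ together with the associated gain/Lyapunov equations, and set $Y=P^{-1}$, so that the observability-type inequality $A^TP+PA\le 0$ becomes $AY+YA^T\le 0$ after a congruence by $Y$.

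The hard part will be extracting the \emph{equality} $B+AYC^T=0$ rather than a mere inequality. The KYP data naturally produce a relation of the form $PB = C_F^T - L^TW$, i.e. an equation in which the Lyapunov matrix multiplies $B$ on the left; converting with $Y=P^{-1}$ yields a term $YA^TC^T$, whereas the NI condition has $A$ acting on the \emph{left} of $Y$ in $B=-AYC^T$, so the two are not interchangeable by a single congruence. The mechanism that closes this gap is the boundary condition $F(0)=0$ (equivalently $\det(A)\neq0$), which makes the positive real inequality tight at $\omega=0$ and forces the residual $L^TW$ terms from the KYP lemma to vanish; combined with minimality this upgrades the inequality to the stated equality. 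Handling possible imaginary-axis poles of $G$, where the realization is only marginally stable and the positive real lemma must be read with its own residue condition, is the other delicate point, and I would treat it either by a limiting/perturbation argument or by decomposing $G$ into its imaginary-axis and strictly stable parts.
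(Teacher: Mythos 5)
The paper itself offers no proof of this lemma: it is imported verbatim from the cited reference \cite{Xiong_NI_2010}, so there is no in-paper argument to compare against. What you propose is essentially a reconstruction of the proof in that reference: sufficiency by direct frequency-domain computation, necessity via the auxiliary function $F(s)=s\big(G(s)-D\big)$ with realization $(A,B,CA,CB)$ and the general (Anderson--Vongpanitlerd) positive real lemma. Your sufficiency computation is correct and I can verify the key identity: with $R=(j\omega I-A)^{-1}$ one indeed gets $j\big(G(j\omega)-G^*(j\omega)\big)=\omega\,C(RY+YR^*)C^T=-\omega\,CR(AY+YA^T)R^*C^T\geq 0$, and the Lyapunov inequality $AY+YA^T\leq 0$, $Y>0$ together with $\det(A)\neq 0$ gives the pole-location condition of Definition \ref{def:NI}.

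Two points need tightening. First, in the necessity direction, your statement that $F(0)=0$ ``forces the residual $L^TW$ terms to vanish'' is not literally what happens: $L^TW$ need not vanish. The correct mechanism is that the spectral factorization accompanying the PR lemma, $F(j\omega)+F(j\omega)^*=\big[W+L(j\omega I-A)^{-1}B\big]^*\big[W+L(j\omega I-A)^{-1}B\big]$, evaluated at $\omega=0$ where $F(0)=0$, gives $W=LA^{-1}B$; substituting this into the KYP equation $PB=A^TC^T-L^TW$ and eliminating $L^TL$ via $PA+A^TP=-L^TL$ yields $A^T\big(C^T+PA^{-1}B\big)=0$, hence $B+AYC^T=0$ with $Y=P^{-1}$, using invertibility of $A$. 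This is recoverable within your framework, but the step as written is wrong, and it is precisely the crux you yourself identified as ``the hard part.'' Second, in the sufficiency direction, condition 3 of Definition \ref{def:NI} cannot simply be ``read off from the same factorization'': you must first show that every imaginary-axis eigenvalue of $A$ is semisimple (this does follow from $AY+YA^T\leq 0$, $Y>0$), so that any pole of $G$ on the axis is simple, and then separately show that the residue $\omega_0\,CP_0YC^T$ (with $P_0$ the spectral projection of $A$ at $j\omega_0$) is Hermitian and positive semidefinite, which requires an argument about the structure of $P_0Y$. Relatedly, for necessity you must verify that $F$ is PR including its imaginary-axis residue condition; this follows cleanly from condition 3 for $G$ (the residue of $F$ at $j\omega_0$ is $\omega_0K_0\geq 0$), and is simpler and more direct than the ``limiting/perturbation argument'' you suggest.
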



Recently, the NI property has been generalized to nonlinear systems and the existing stability results have been extended to a nonlinear setting using Lyapunov and dissipativity theories~\cite{Ghallab_Nonlinear_NI_2018}. Accordingly, a nonlinear system is said to be NNI if the nonlinear system is passive from the input to the derivative of the output. Here, we highlight the main results on NNI systems. These results are used in the ensuing derivations.

Considering a general nonlinear system~\cite{Ghallab_Nonlinear_NI_2018}
\begin{subequations}\label{eq:general_nonlinear_system}
\begin{align}
	\dot{x} &= f(x,u),\\
	y &= h(x),
\end{align}
\end{subequations}
where $f: \mathbb{R}^n \times \mathbb{R}^p \to \mathbb{R}^n$ is a Lipschitz continuous function and $h: \mathbb{R}^n \to \mathbb{R}^p$ is a continuously differentiable function, the following definition describes the NI property of nonlinear systems.
\begin{definition}{~\cite{Ghallab_Nonlinear_NI_2018,shi_robust_identical,ghallab2022negative}}\label{def:nonlinear_NI}
	A system of the form (\ref{eq:general_nonlinear_system}) is said to be an NNI system if there exists a positive definite continuously differentiable storage function $V:\mathbb{R}^n \to \mathbb{R}$ such that
	\begin{equation*}
		\dot{V}(x(t)) \leq \dot{y}(t)^Tu(t), \quad \forall\, t \geq 0.
	\end{equation*}
\end{definition}
\section{MAIN RESULTS}\label{sec:main}
In this section, we introduce the main results on the NNI property of the HIGS element and the stability of a positive feedback interconnection of the HIGS and a linear NI system.

\subsection{NNI Property of a HIGS}
We first show a property of the HIGS (\ref{eq:HIGS_DAE}) in Lemma \ref{lemma:F implication}, which is implied by the sector constraint (\ref{eq:subspace_F}). This property will be used later in the stability analysis.
\begin{lemma}\label{lemma:F implication}
Consider a HIGS element with the system model (\ref{eq:HIGS_DAE}). This system satisfies 
\begin{equation*}
ex_h-k_he^2\leq 0,
\end{equation*}
where the equality holds only if $x_h = k_he$.
\end{lemma}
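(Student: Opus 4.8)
The plan is to derive the inequality directly from the sector characterization (\ref{eq:subspace_F}) of the HIGS, so the first step is to argue that the sector constraint is in force at every instant, regardless of the active mode. By construction $\mathcal{F}=\mathcal{F}_1\cup\mathcal{F}_2$ with $\mathcal{F}_1,\mathcal{F}_2$ as in (\ref{eq:subspaces_F1})--(\ref{eq:subspaces_F2}), and the HIGS always operates within $\mathcal{F}$: in the integrator mode $(e,u,\dot e)\in\mathcal F_1\subset\mathcal F$, while in the gain mode $(e,u,\dot e)\in\mathcal F_2\subset\mathcal F$ and moreover $x_h=k_he$ holds exactly. Hence in both modes I may invoke $eu\ge u^2/k_h$ from (\ref{eq:subspace_F}).

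Next, using the output relation $u=x_h$ from (\ref{eq:HIGS_DAE}) and multiplying by $k_h>0$, I would rewrite the sector inequality as $x_h(k_he-x_h)\ge0$. This says that $x_h$ and $k_he-x_h$ share the same sign, with the product vanishing only when one of the factors is zero; this is the single algebraic fact carrying the whole proof. Introducing $w:=k_he-x_h$ and eliminating $e=(w+x_h)/k_h$, the target quantity becomes $k_he^2-ex_h=e\,w=(w^2+x_hw)/k_h$. Since $w^2\ge0$ and $x_hw\ge0$ by the sector fact, I obtain $k_he^2-ex_h\ge0$, i.e.\ $ex_h-k_he^2\le0$, which is the claimed inequality.

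For the equality case, I would observe that $ex_h-k_he^2=0$ forces $w^2+x_hw=0$ with both summands nonnegative, hence $w^2=0$, i.e.\ $w=0$, which is exactly $x_h=k_he$; conversely, $x_h=k_he$ gives $w=0$ and equality holds. This pins down equality precisely on the gain-mode boundary, as stated.

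The proof is essentially a one-line sign argument once the sector constraint is available, so the only point requiring care -- and what I regard as the main (if modest) obstacle -- is the first step: justifying that $(e,u,\dot e)\in\mathcal F$ holds uniformly in time despite the two-mode DAE structure and the $\dot e$-dependence buried in the definition of $\mathcal F_2$. Everything after that is routine algebra, with the only subtlety being the careful bookkeeping of signs that yields the sharp equality condition.
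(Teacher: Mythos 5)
Your proof is correct and is essentially the paper's own argument in different clothing: both rest on the sector constraint $eu\ge u^2/k_h$ (with $u=x_h$) combined with the nonnegativity of the square $(k_he-x_h)^2$, your identity $k_h\left(k_he^2-ex_h\right)=(k_he-x_h)^2+x_h(k_he-x_h)$ being exactly the paper's chain $ex_h-k_he^2\le \frac{1}{k_h}x_h^2-ex_h\le 0$ rearranged, and both extract the equality case $x_h=k_he$ from vanishing of the square. Your extra care in checking that $\mathcal F_2\subset\mathcal F$ so the sector bound holds in both modes is a point the paper leaves implicit, but it does not change the route.
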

\begin{proof}
Consider the inequality
\begin{equation*}
\left(\sqrt{\frac{1}{k_h}}x_h-\sqrt{k_h}e\right)^2\geq 0,
\end{equation*}
where equality holds only if $x_h=k_he$. Therefore,
\begin{equation*}
\frac{1}{k_h}x_h^2 - 2ex_h + k_he^2 \geq 0.
\end{equation*}
Considering the condition in $\mc F$ as given in (\ref{eq:subspace_F}), this implies that
\begin{equation*}
ex_h-k_he^2\leq \frac{1}{k_h}x_h^2 - ex_h \leq 0,	
\end{equation*}
where equality holds only at $x_h=k_he$.
\end{proof}

Considering the HIGS system in (\ref{eq:HIGS_DAE}), the state and output of HIGS are Lipschitz continuous given a real integrable input and its integrable derivative (see Theorem 4.6.1 in~\cite{Baaij_HIGS_positive_real_systems}). In the following, we show that the HIGS system (\ref{eq:HIGS_DAE}) with input $e(t)$ and output $u(t)$ is NNI.

\begin{theorem}\label{theorem:HIGS_NNI}
    Consider a SISO hybrid integrator-gain system as in (\ref{eq:HIGS_DAE}), then the HIGS is an NNI system from input $e$ to the output $u$ with a positive definite storage function formulated as
	\begin{equation*}
		V(x_h) = \frac{1}{2k_h}x_h^2
	\end{equation*}
	satisfying
	\begin{equation}\label{eq:theorem_Vdot_HIGS}
	    \dot{V}(x_h) \leq \dot{u}e.
	\end{equation}
\end{theorem}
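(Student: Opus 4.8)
The plan is to verify the dissipation inequality (\ref{eq:theorem_Vdot_HIGS}) directly by differentiating the proposed storage function $V(x_h) = \frac{1}{2k_h}x_h^2$ along the HIGS trajectories and comparing $\dot V(x_h)$ with $\dot u e$, treating the integrator and gain modes separately. Positive definiteness of $V$ is immediate from $k_h > 0$. Since $u = x_h$, wherever the derivatives exist we have $\dot u = \dot x_h$ and $\dot V(x_h) = \frac{1}{k_h}x_h\dot x_h$. First I would rewrite the target as
\[
\dot V(x_h) - \dot u e = \dot x_h\left(\frac{1}{k_h}x_h - e\right),
\]
so that the whole claim reduces to establishing $\dot x_h\left(e - \frac{1}{k_h}x_h\right) \ge 0$ in each mode of operation.

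In the integrator mode, $(e,u,\dot e)\in\mathcal F_1\subseteq\mathcal F$ and $\dot x_h = \omega_h e$. Substituting gives
\[
\dot x_h\left(e - \frac{1}{k_h}x_h\right) = \frac{\omega_h}{k_h}\left(k_h e^2 - e x_h\right) = -\frac{\omega_h}{k_h}\left(e x_h - k_h e^2\right).
\]
Here I would invoke Lemma \ref{lemma:F implication}, which guarantees $e x_h - k_h e^2 \le 0$ on $\mathcal F$; together with $\omega_h\ge 0$ and $k_h>0$ this shows the expression is nonnegative, so (\ref{eq:theorem_Vdot_HIGS}) holds in this mode. In the gain mode, $(e,u,\dot e)\in\mathcal F_2$ enforces the algebraic relation $x_h = k_h e$. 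Since the dynamics are no longer a differential equation, I would differentiate this constraint to obtain $\dot x_h = k_h\dot e$ wherever $e$ is differentiable, and then evaluate both sides: $\dot V(x_h) = \frac{1}{k_h}x_h\dot x_h = k_h e\dot e$ and $\dot u e = \dot x_h e = k_h e\dot e$, so the inequality in fact holds with equality.

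The individual computations are short, so the main care goes into the mode-switching structure rather than any single algebraic step. I expect the principal subtlety to be bookkeeping at switching instants, where $\dot x_h$ (hence $\dot u$) need not exist. I would handle this by relying on the continuity of $x_h$ and $u$ guaranteed by the HIGS construction and the regularity result cited from \cite{Baaij_HIGS_positive_real_systems}, which makes $V(x_h(t))$ absolutely continuous; the dissipation inequality then need only hold at the almost-every times where the derivatives exist, so the isolated switching instants do not affect the conclusion. Combining the two mode estimates with the positive definiteness of $V$ yields the NNI property of the HIGS from input $e$ to output $u$.
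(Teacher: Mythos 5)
Your proposal is correct and follows essentially the same argument as the paper: positive definiteness from $k_h>0$, computing $\dot V = \frac{1}{k_h}x_h\dot x_h$, invoking Lemma \ref{lemma:F implication} in the integrator mode, and using the constraint $x_h = k_h e$ to get equality in the gain mode. Your added care about almost-everywhere differentiability at switching instants is a small extra refinement of rigor the paper handles only implicitly (via the cited Lipschitz continuity result), not a different route.
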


\begin{proof}
	The storage function $V(x_h)$ is positive definite since $k_h>0$. Here, we prove that (\ref{eq:theorem_Vdot_HIGS}) holds in both integrator and gain modes. Taking the time derivative of $V$, we have that
	\begin{equation*}
	    \dot{V}(x_h) = \frac{1}{k_h}x_h\dot{x}_h.
	\end{equation*}
	\textit{\textbf{Case 1. }} The HIGS operates in the integrator mode. In this case, according to (\ref{eq:HIGS_DAE}), we have that $\dot{x}_h = \omega_he$. Therefore, $\dot{V}$ is obtained as
    \begin{align}\label{eq:Vdot_integrator_mode}
	    \dot{V}(x_h) &= \frac{1}{k_h}\omega_h eu \nonumber\\
	    &\leq \omega_he^2 = \dot{u}e.
	\end{align}
	where the inequality follows from Lemma \ref{lemma:F implication} and the equation $u=x_h$ in (\ref{eq:HIGS_DAE}).\\
	\textit{\textbf{Case 2. }} The HIGS operates in the gain mode. In this case, the system is in the region $\mathcal F_2$ as given in (\ref{eq:subspaces_F2}), where $u = x_h = k_he$. Therefore,
	\begin{equation}\label{eq:Vdot_gain_mode}
	    \dot{V}(x_h) = \frac{1}{k_h}k_he\dot{x}_h = \dot{u}e.
	\end{equation}
According to (\ref{eq:Vdot_integrator_mode}) and (\ref{eq:Vdot_gain_mode}), the HIGS is an NNI system.
\end{proof}

\subsection{Stability of the Closed-loop Interconnection of a Linear NI System and a HIGS}

Consider the interconnection of a SISO linear NI system $G(s)$ and a HIGS as shown in Fig.~\ref{fig:interconnection}. We analyze the stability of the positive feedback interconnection of a linear NI system and a HIGS in the following.

\begin{figure}[h!]
\centering
\psfrag{in_0}{$r=0$}
\psfrag{in_1}{$u$}
\psfrag{y_1}{$y$}
\psfrag{e}{$e$}
\psfrag{x_h}{$x_h$}
\psfrag{plant}{$G(s)$}
\psfrag{HIGS}{\hspace{-0.3cm} HIGS $\mc H$}
\psfrag{+}{\small$+$}
\includegraphics[width=8cm]{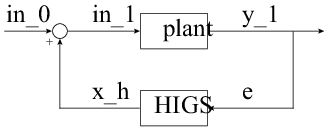}
\caption{Closed-loop interconnection of a linear NI system and a HIGS.}
\label{fig:interconnection}
\end{figure}

Consider a minimal realization of the linear NI system $G(s)$ described as follows:
\begin{subequations}\label{eq:G(s)}
\begin{align}
\dot x =&\ Ax+Bu,\label{eq:G(s) state eq}\\
y =&\ Cx,
\end{align}
\end{subequations}
where $x\in \mathbb R^n$, $u,y \in \mathbb R$ are the state, input and output of the system, respectively. Here, $A\in \mathbb R^{n\times n}$, $B\in \mathbb R^{n\times 1}$ and $C\in \mathbb R^{1\times n}$.

\begin{theorem}\label{theorem:stability of single interconnection}
Consider a SISO linear NI system $G(s)$ with the minimal realization (\ref{eq:G(s)}). There exists a HIGS $\mc H$ of the form (\ref{eq:HIGS_DAE}) such that the closed-loop interconnection of $G(s)$ and $\mc H$ shown in Fig.~\ref{fig:interconnection} is asymptotically stable.
\end{theorem}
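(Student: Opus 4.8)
The plan is to construct a single Lyapunov function for the closed loop by summing the storage functions of the two subsystems and subtracting the product of the interface signals, and then to conclude asymptotic stability via LaSalle's invariance principle. First I would read off the interconnection equations from Fig.~\ref{fig:interconnection}: with $r=0$ and positive feedback the HIGS input satisfies $e=y=Cx$, while the plant (\ref{eq:G(s)}) is driven by $u=x_h$. Theorem~\ref{theorem:HIGS_NNI} already supplies the HIGS side, namely $V(x_h)=\tfrac{1}{2k_h}x_h^2$ with $\dot V\le \dot u\,e$, so the remaining ingredient is the dual inequality $\dot V_G\le u\,\dot y$ for $G(s)$.

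To obtain that inequality I would use the fact that a minimal linear NI system is itself NNI. The cleanest route is the positive-real connection implied by Lemma~\ref{lem:NI_lemma}: for $\omega>0$ one has $\Re[\,j\omega G(j\omega)]=-\omega\,\Im[G(j\omega)]\ge 0$, so $sG(s)$ is positive real, and it admits the minimal realization $(A,B,CA,CB)$ whose output is exactly $\dot y=CAx+CBu$. Applying the positive-real (KYP) lemma to this realization produces a matrix $P=P^T>0$ with $A^TP+PA\le 0$ together with the appropriate coupling condition, and $V_G(x)=\tfrac12 x^TPx$ then satisfies $\dot V_G\le u\,\dot y$ along trajectories of (\ref{eq:G(s)}). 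Note that the natural candidate $P=Y^{-1}$ from Lemma~\ref{lem:NI_lemma} is \emph{not} the correct matrix for this supply rate, which is why I would route through the positive-real lemma for $sG(s)$ instead.

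With both dissipation inequalities available, I would add them using $e=y$ and $u=x_h$, giving $\dot V_G+\dot V\le u\,\dot y+\dot u\,y=\tfrac{d}{dt}(uy)$, so the candidate $W:=V_G+V-uy=\tfrac12 x^TPx+\tfrac{1}{2k_h}x_h^2-x_h Cx$ satisfies $\dot W\le 0$. It then remains to make $W$ positive definite, and this is where the HIGS is chosen. Viewing $W$ as a quadratic form in $(x,x_h)$ and taking a Schur complement reduces positive definiteness to $P>k_h\,C^TC$; since $P>0$, any sufficiently small gain $k_h$ (equivalently the DC-gain condition $k_hG(0)<1$) guarantees $W>0$, while $\omega_h>0$ is free. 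This fixes an admissible HIGS $\mathcal H$ and establishes existence.

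Finally I would run an invariance argument: $W$ is positive definite, continuous across switches (the HIGS output, hence $x_h$, is continuous by construction) and non-increasing, so trajectories are bounded and LaSalle applies. On the set $\dot W=0$ both inequalities hold with equality; the plant term forces $x^T(A^TP+PA)x=0$, which by minimality (observability) of $(A,B,C)$ drives $x\to 0$, while the equality condition of Lemma~\ref{lemma:F implication} together with $u=x_h$ drives $x_h\to 0$. I expect this last step to be the main obstacle: because the gain-mode relation is algebraic rather than differential and the system switches between the two modes, the smooth LaSalle theorem does not apply verbatim, so the invariance principle must be run carefully on the hybrid trajectory to rule out a limit set that persists in the zero-dissipation region without collapsing to the origin. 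Securing the plant storage inequality with precisely the supply rate $u\,\dot y$ is the other point that needs care.
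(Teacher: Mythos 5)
Your Lyapunov function is the same one the paper uses, namely $W(x,x_h)=\tfrac12 x^TPx+\tfrac{1}{2k_h}x_h^2-x_hCx$ with positive definiteness settled by a Schur complement, and your overall plan (sum the two dissipation inequalities, subtract $\tfrac{d}{dt}(uy)$, invoke LaSalle) is the paper's plan. However, your parenthetical claim that $P=Y^{-1}$ from Lemma~\ref{lem:NI_lemma} is \emph{not} the right matrix for the supply rate $u\,\dot y$ is incorrect, and the detour through the positive-real lemma for $sG(s)$ is unnecessary. From $B+AYC^T=0$ one has $C=-B^TA^{-T}Y^{-1}$, hence with $V_G(x)=\tfrac12 x^TY^{-1}x$,
\begin{equation*}
\dot V_G-u\dot y=\left(x^TY^{-1}-uC\right)\dot x=\dot x^TA^{-T}Y^{-1}\dot x=\tfrac12\,\dot x^T\!\left(A^{-T}Y^{-1}+Y^{-1}A^{-1}\right)\dot x\le 0,
\end{equation*}
because $AY+YA^T\le 0$ implies $A^{-T}Y^{-1}+Y^{-1}A^{-1}\le 0$. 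So $Y^{-1}$ certifies exactly the inequality you want; moreover, your equivalence between $P>k_hC^TC$ and $k_hG(0)<1$ is only valid for $P=Y^{-1}$, since it rests on $CP^{-1}C^T=CYC^T=-CA^{-1}B=G(0)$; a generic KYP solution $P$ only gives $k_hCP^{-1}C^T<1$.

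The genuine gap is in your last step. On the set where $\dot W=0$ you obtain (i) $x_h=k_he$ (the equality case of Lemma~\ref{lemma:F implication}) and (ii) $\dot x^T(A^{-T}Y^{-1}+Y^{-1}A^{-1})\dot x=0$. The matrix in (ii) is in general only negative \emph{semi}definite --- for a lossless NI plant such as an undamped flexible structure, $AY+YA^T=0$ and (ii) is vacuous --- so (ii) does not force $\dot x=0$, let alone $x=0$. Observability cannot rescue this: observability lets you conclude $x\equiv 0$ from $y\equiv 0$, but nothing on the candidate invariant set says $y$ vanishes; that set consists of closed-loop trajectories of $\dot x=(A+k_hBC)x$ satisfying $u=k_hy$ with $y$ free. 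Likewise, (i) gives $x_h=k_he$, not $x_h\to 0$; the latter only follows after $x\to 0$ is established. The paper closes exactly this hole with a case analysis on the invariant set: constant solutions ($\dot x\equiv 0$) are excluded by $k_hG(0)<1$, since a nonzero equilibrium would have to satisfy $\bar u=k_hG(0)\bar u$; non-constant solutions evolving in the kernel of $A^{-T}Y^{-1}+Y^{-1}A^{-1}$ force $A+k_hBC=\alpha I$ for some $\alpha\neq 0$, which is excluded by using the remaining freedom in $k_h$ (the set $\{k_h>0:\ k_hG(0)<1\}$ has nonempty interior) to ensure $A+k_hBC\neq\alpha I$ for all $\alpha\neq 0$, the case $A=\gamma I$ being handled separately since there the kernel is trivial. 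This second constraint on $k_h$, beyond the DC-gain condition, is the substantive content behind the theorem's ``there exists a HIGS'' phrasing, and it is absent from your argument --- you flagged the difficulty but did not resolve it.
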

\begin{proof}
Since (\ref{eq:G(s)}) is a minimal realization of the linear NI system $G(s)$, then according to Lemma \ref{lem:NI_lemma}, we have that $\det(A)\neq 0$ and there exists $Y=Y^T>0$, $Y\in \mathbb R^{n\times n}$ such that
\begin{equation}\label{eq:NI lemma relations}
	AY+YA^T\leq 0,\qquad \textnormal{and} \qquad B+AYC^T=0.
\end{equation}	
Using Lyapunov's direct method, let the storage function of the closed-loop interconnection be
\begin{align}
W(x,x_h)=&\ \frac{1}{2}x^TY^{-1}x+\frac{1}{2k_h}x_h^2-Cxx_h\notag\\
=&\ \frac{1}{2}\left[\begin{matrix}x^T & x_h\end{matrix}\right]\left[\begin{matrix}Y^{-1}&-C^T\\-C&\frac{1}{k_h}\end{matrix}\right]\left[\begin{matrix}x \\ x_h\end{matrix}\right].\label{eq:sf_for_interconnection}
\end{align}
Using Schur Complement theorem, $W(x,x_h)>0$ for all $(x,x_h)\neq (0,0)$ if
\begin{equation}\label{eq:W pd initial}
	\frac{1}{k_h}-CYC^T>0.
\end{equation}
Using (\ref{eq:NI lemma relations}), we have that $CYC^T=-CA^{-1}B=G(0)$. Then, (\ref{eq:W pd initial}) becomes
\begin{equation*}
	k_hG(0)<1.
\end{equation*}
Using Schur Complement theorem, the positive definiteness of $W(x,x_h)$ also implies that
\begin{equation}\label{eq:W pd condition 2}
	Y^{-1}-k_hC^TC>0,
\end{equation}
which is equivalent to the condition in (\ref{eq:W pd initial}) in this case.
Taking the time derivative of $W(x,x_h)$ defined in (\ref{eq:sf_for_interconnection}), we have
\begin{align}
\dot W(x,x_h) =&\ x^TY^{-1}\dot x+\frac{1}{k_h}x_h\dot x_h-C\dot xx_h-Cx\dot x_h\notag\\
=& \left(x^TY^{-1}-x_hC\right)\dot x+\left(\frac{1}{k_h}x_h-C x\right)\dot x_h\notag\\
=& \left(x^TY^{-1}-uC\right)\dot x+\left(\frac{1}{k_h}x_h-e\right)\dot x_h\notag\\
=&\left(x^TY^{-1}+uB^TA^{-T}Y^{-1}\right)\dot x+\left(\frac{1}{k_h}x_h-e\right)\dot x_h\notag\\
=&\left(x^TA^T+uB^T\right)(A^{-T}Y^{-1})\dot x+\left(\frac{1}{k_h}x_h-e\right)\dot x_h\notag\\
=&\ \frac{1}{2}\dot x^T (A^{-T}Y^{-1}+Y^{-1}A^{-1})\dot x+\left(\frac{1}{k_h}x_h-e\right)\dot x_h,\notag\\
\end{align}
where $u = x_h$ and $e=y=Cx$ are also used. We have that
\begin{align}
	\left(\frac{1}{k_h}x_h-e\right)\dot x_h = \begin{cases}
\left(\frac{1}{k_h}x_h-e\right)\omega_he, & \text{if} (e,x_h,\dot e)\in\mathcal F_1\\
\left(\frac{1}{k_h}x_h-e\right)k_h\dot e, & \text{if} (e,x_h,\dot e)\in \mathcal F_2	
\end{cases}\notag\\
= \begin{cases}
\frac{\omega_h}{k_h}\left(ex_h-k_he^2\right), & \text{if} (e,x_h,\dot e)\in\mathcal F_1\\
\dot e\left(x_h-k_he\right), & \text{if} (e,x_h,\dot e)\in \mathcal F_2.
\end{cases}\label{eq:ineq of HIGS}
\end{align}
In $\mc F_2$, according to (\ref{eq:subspaces_F2}), we have that $x_h=k_he$. Hence, $\left(\frac{1}{k_h}x_h-e\right)\dot x_h=\dot e\left(x_h-k_he\right)=0$. In $\mathcal F_1$, according to Lemma \ref{lemma:F implication}, we have that
\begin{equation*}
	ex_h-k_he^2\leq 0,
\end{equation*}
where equality holds only if $x_h=k_he$. 
Therefore, following from (\ref{eq:ineq of HIGS}), we have that
\begin{equation*}
\left(\frac{1}{k_h}x_h-e\right)\dot x_h \leq 0,
\end{equation*}
and $\left(\frac{1}{k_h}x_h-e\right)\dot x_h = 0$ only if $x_h=k_he$.
We also have that $\frac{1}{2}\dot x^T (A^{-T}Y^{-1}+Y^{-1}A^{-1})\dot x\leq 0$ because $A^{-T}Y^{-1}+Y^{-1}A^{-1}\leq 0$ according to (\ref{eq:NI lemma relations}). Therefore, $\dot W(x,x_h)\leq 0$ and $\dot W(x,x_h)=0$ only if $x_h=k_he$ and $\dot x^T (A^{-T}Y^{-1}+Y^{-1}A^{-1})\dot x=0$. Using LaSalle's invariance principle, $\dot W(x,x_h)$ stays at zero only if $x_h\equiv k_he$ and $\dot x^T (A^{-T}Y^{-1}+Y^{-1}A^{-1})\dot x\equiv 0$. We only consider the case $x\neq 0$ in the following because if $x= 0$ then $x_h = k_he = k_hy = k_hCx = 0$. The condition $x_h\equiv k_he$ implies that
\begin{equation*}
	u\equiv k_hy \equiv k_hCx,
\end{equation*}
according to the setting of the interconnection $u=x_h$ and $e=y=Cx$. In this case, the state equation (\ref{eq:G(s) state eq}) of the system $G(s)$ becomes
\begin{equation}
	\dot x = Ax+Bu = Ax+Bk_hCx = (A+k_hBC)x.
\end{equation}
Using (\ref{eq:NI lemma relations}), we have that
\begin{equation}
	A+k_hBC = A-k_hAYC^TC = AY(Y^{-1}-C^TC),
\end{equation}
which is nonsingular according to (\ref{eq:W pd condition 2}) and the non-singularity of the matrices $A$ and $Y$. Therefore, for any $x\neq 0$, we have that $\dot x \neq 0$ and similarly $\ddot x \neq 0$. That is, $\dot x$ can neither remain zero nor a constant vector. In this case, the condition $\dot x^T (A^{-T}Y^{-1}+Y^{-1}A^{-1})\dot x\equiv 0$ implies that $\dot x$ must stay in the null space of the matrix $A^{-T}Y^{-1}+Y^{-1}A^{-1}$. Now we prove that $\dot W(x,x_h)=0$ cannot hold forever. First we prove by contradiction that the HIGS $\mc H$ cannot stay in the integrator mode. Suppose $\mc H$ is in the integrator mode. Then we have that $\dot x_h = \omega_h e$ according to (\ref{eq:HIGS_DAE}). Since we also have that $x_h\equiv k_he$. Then
\begin{equation}\label{eq:ODE in the proof}
	\dot x_h = \omega_h e = k_h \dot e.
\end{equation}
Since $x$ does not remain zero, then according to the observability of the system $G(s)$, we have that $e=y$ does not remain zero. Choosing $\omega_h > 0$, if (\ref{eq:ODE in the proof}) holds for a finite time interval $[t_a,t_b]$ where $t_a<t_b$, then in this time interval,
\begin{equation}
	e(t) = e(t_a)exp(\frac{\omega_h}{k_h}t).
\end{equation}
This means that $y = e$ diverges, which contradicts the fact that the closed-loop system in Fig.~\ref{fig:interconnection} is Lyapunov stable, as has been proved above by showing $\dot W(x,x_h)\leq 0$.
This means that if $\dot W(x,x_h)\equiv 0$, then $\mc H$ cannot stay in the integrator mode. Now we prove that we can always force $\mc H$ to exist the gain mode by choosing suitable HIGS parameters. Suppose $\mc H$ is in the gain mode, then according to (\ref{eq:subspaces_F2}), we have that
\begin{equation}
	\omega_h e^2> k_h e \dot e.
\end{equation}
This inequality cannot be satisfied over time via satisfying $e \dot e<0$ because if so, then $\dot V(x_h) \leq e \dot x_h = k_he\dot e<0$. This implies that the HIGS state $x_h$ will converge to zero and so will $x$ because $x_h = k_h e$, $e = y$ and the system $G(s)$ is observable. Therefore, the HIGS $\mc H$ in the gain mode will eventually satisfy $e\dot e>0$. Since the trajectories of $e$ and $\dot e$ in gain mode are independent of $\omega_h$, then we can choose $\omega_h$ to be sufficiently small in order to violate the condition $\omega_h e^2> k_h e \dot e$ in $\mc F_2$. Hence, $\mc H$ will exist the gain mode and stay in the integrator mode for some finite time. This contradicts the fact that $\mc H$ cannot stay in the integrator mode. Therefore, $\dot W(x,x_h)\equiv 0$ will be violated and $W(x,x_h)$ will decrease monotonically until it reaches zero.

\end{proof}

\begin{figure}[h!]
	\centering
	\begin{subfigure}[b]{\columnwidth}
	\centering
		\includegraphics[width=0.8\columnwidth]{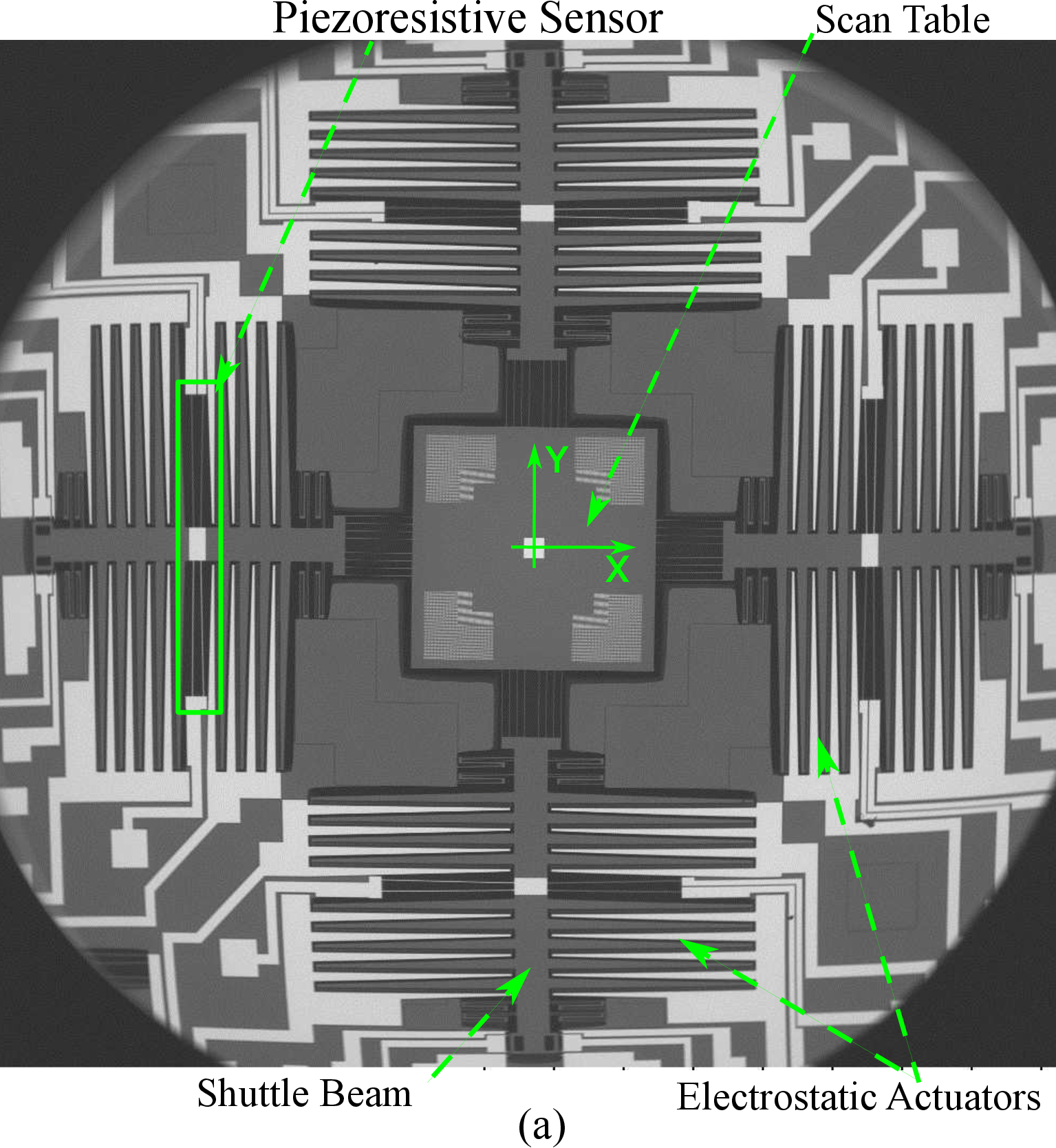}
	\end{subfigure}
	\begin{subfigure}[b]{\columnwidth}
	\centering
		\includegraphics[width=\columnwidth]{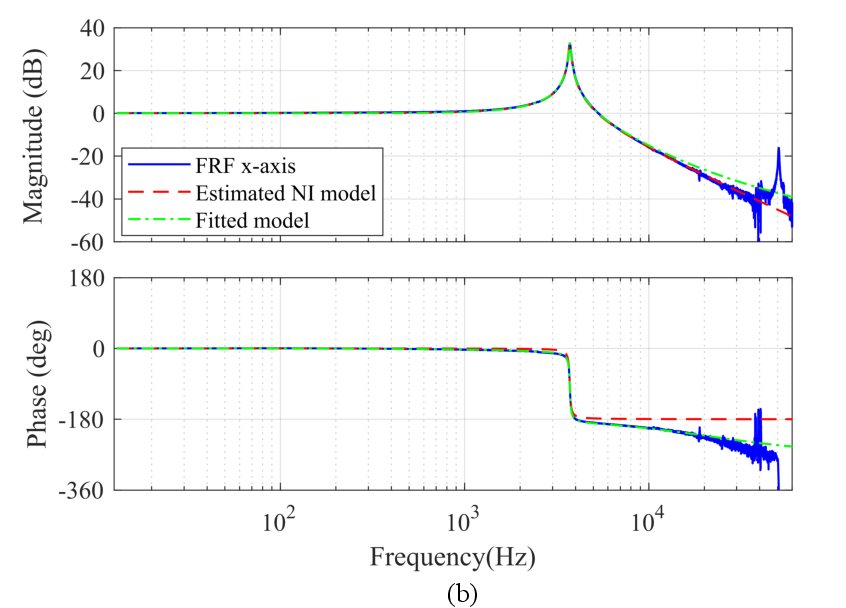}
	\end{subfigure}
	\caption{(a) SEM image of the MEMS nanopositioner reported in~\cite{Maroufi_2DOF_2016}. (b) Frequency response of the MEMS nanopositioner (blue line), along with the approximated NI model (dashed red line), and the fitted model (dot dashed green line). }\label{fig:SEM_image_and_FRF}
\end{figure}

\section{ILLUSTRATIVE EXAMPLE: A MEMS NANOPOSITIONER}\label{sec:example}
In this section, we apply the HIGS controller to a MEMS nanopositioner as a linear NI system and demonstrate the applicability of the stability result.

As shown in Fig.~\ref{fig:SEM_image_and_FRF}(a), the 2-DOF MEMS nanopositioner features four electrostatic actuators and on-chip bulk piezoresistive sensors to move the central stage bidirectionally in X- and Y-axis and measure the displacement of the stage, respectively. This device was previously reported in~\cite{Maroufi_2DOF_2016}, where it was employed as a scanning stage for high-speed atomic force microscopy~\cite{Nikooienejad_ILC_2021}. Here, we consider the SISO control design problem for the X-axis of the nanopositioner.

\begin{figure}[h!]
	\centering
	\includegraphics[width=\columnwidth]{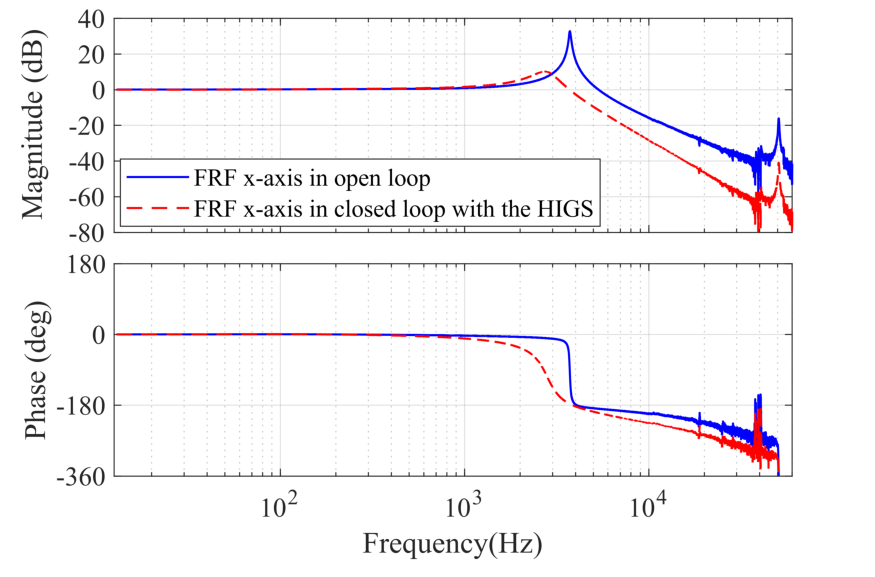}
	\caption{Bode plot of the MEMS nanopositioner in open loop and in a positive feedback interconnection with the HIGS. }\label{fig:CL_FRF}
\end{figure}

Under ideal conditions, a 2-DOF MEMS nanopositioner with collocated actuators and sensors and lightly damped modes can be considered as an NI system~\cite{Xiong_NI_2010}. However, due to the fabrication tolerances and signal conditioning circuit used to read the sensor output voltage, the nanopositioner violates the NI system property beyond a certain frequency where the phase exceeds -180 degrees. Fig.~\ref{fig:SEM_image_and_FRF}(b) shows the frequency response function (FRF) of the nanopositioner from actuation to the sensor output. The fundamental resonance frequency of the nanopositioner along X axis is at $3.725\, \text{kHz}$. We observe that the phase drops beyond $4.064\, \text{kHz}$ and the system violates the NI property as frequency increases. However, the the frequency response rolls off at the rate of $40\, \text{dB/decade}$, and the phase deviation is negligible up to $10\, \text{kHz}$. Therefore, the frequency response of the system can be approximated by a second-order NI model. 
For sake of comparison, both the approximated NI model and the fitted non-NI model are depicted in Fig.~\ref{fig:SEM_image_and_FRF}(b).

Considering the fitted NI model as
\begin{equation*}
    G(s) = \frac{5.493\times 10^8}{s^2 + 541.6 s + 5.465\times 10^8}.
\end{equation*}
The NI property of the system can be verified according to Definition~\ref{def:NI}. The minimal state space realization of the plant is obtained as
\begin{align}
A &= \begin{bmatrix}
-547.571 & -1.6676e4\\
32768 & 0
\end{bmatrix},\quad 
B = \begin{bmatrix}
128\\ 0
\end{bmatrix}\nonumber, \\ C &= \begin{bmatrix} 0 & 130.9727\end{bmatrix},\quad D = 0.\notag
\end{align}
Closed-loop stability of the MEMS nanopositioner in a positive feedback interconnection with the HIGS element can be investigated through Theorem~\ref{theorem:stability of single interconnection}.
\begin{figure}[h!]
	\centering
	\includegraphics[width=\columnwidth]{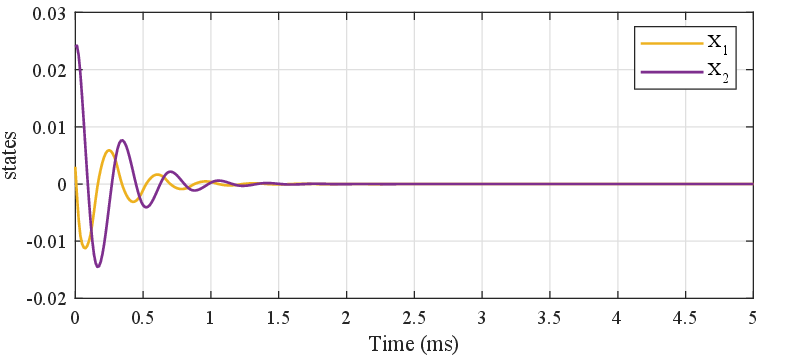}
	\caption{Evolution of state trajectories of the nanopositioner plant under the control of a HIGS controller of the form (\ref{eq:HIGS_DAE}) with $k_h = 0.4939$ and $\omega_h = 1.1705\times 10^4\, \text{rad/s}$}\label{fig:state_trajectories}
\end{figure}

According to Theorem~\ref{theorem:stability of single interconnection}, there exists a HIGS element of the form (\ref{eq:HIGS_DAE}) with $k_h < 0.9868$ that guarantees the closed-loop stability when put in a positive feedback loop with the MEMS nanopositioner. Since $\omega_h$ plays no role in the stability analysis, it can be tuned to achieve the desired performance in time and frequency domains. Accordingly, $k_h = 0.4939$ and $\omega_h = 1.1705\times 10^4\, \text{rad/s}$ are opted for the HIGS. Frequency response of the nanopositioner in closed loop with the HIGS is depicted in Fig.~\ref{fig:CL_FRF}. The closed-loop frequency response is obtained using the describing function of the HIGS and the frequency response data (FRD) model of the nanopositioner. We observe that a substantial damping of about $22\, \text{dB}$ is achieved at the resonance.

We also simulated the positive feedback interconnection of the HIGS and the identified model of the nanopositioner as shown in Fig.~\ref{fig:interconnection}. We captured the state trajectories of the plant in closed loop with zero input and an initial condition of $x_0 = [0.003, 0.024]^T$. The evolution of state trajectories is depicted in Fig.~\ref{fig:state_trajectories}. We observe that the states converge to zero which reveals the stability of the positive feedback interconnection of the MEMS nanopositioner and the selected HIGS element. We also applied a unity step as an input disturbance to the system and analyzed its behavior in closed loop with the HIGS. Fig.~\ref{fig:step_response} demonstrates the step response of the MEMS nanopositioner in open loop and with the HIGS element in a positive feedback interconnection.

\begin{figure}[h!]
	\centering
	\includegraphics[width=\columnwidth]{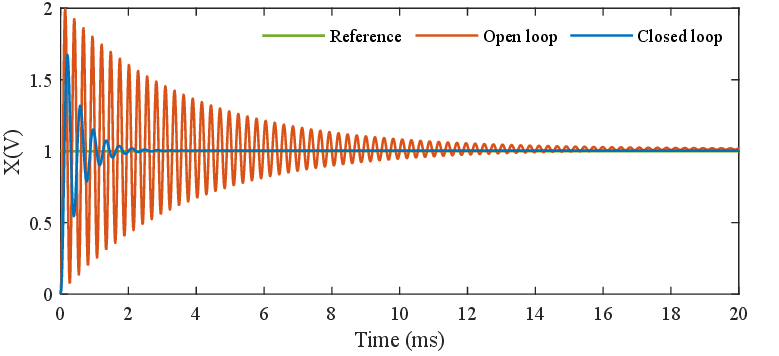}
	\caption{Step response of the MEMS nanopositioner in open loop and with the HIGS. }\label{fig:step_response}
\end{figure}
\section{CONCLUSION}\label{sec:conclusion}
	This paper investigates an application of the hybrid integrator-gain systems to the control of SISO linear NI systems. The NNI property of the HIGS is analyzed using the time-domain definition of NNI systems. The positive feedback interconnection of a HIGS and a linear NI system is also proved to be asymptotically stable. To illustrate the stability results, a HIGS element is designed and applied to the identified model of a MEMS nanopositioner with collocated force actuators and position sensor. Simulation results confirm the closed-loop stability and demonstrate the efficacy of the HIGS element as an NNI controller. Future work will focus on the NI property of the cascade of HIGS elements and the stability for the interconnection of a linear NI system and the cascaded hybrid system. The control design problem with the HIGS in multiple-input multiple-output framework will be further examined.
	\balance

	\bibliographystyle{ieeetr}
	
\end{document}